\newcommand{\Gstar}{\accentset{\star}{G}}
\newcommand{\Pstar}{\accentset{\star}{P}}
\newcommand{\Tstar}{\accentset{\star}{T}}
\newcommand{\Tring}{\accentset{\circ}{T}}
\newcommand{\BN}{[N]}
\newcommand{\fireseq}[1]{{\ensuremath{[{#1}\rangle}}}
\newcommand{\si}{{\ensuremath{\sigma}}}        
\newcommand{\mk}{\mathbf{m}}
\newcommand{\marc}{\textup{\textbf{\textit m}}}
\begin{document}

\title{Analysis of Petri Net Models through Stochastic Differential Equations}

\author{Marco Beccuti\inst{1}, Enrico Bibbona\inst{2},  Andras Horvath\inst{1}, Roberta Sirovich\inst{2},  Alessio Angius\inst{1} and Gianfranco Balbo\inst{1}}

\institute{%
Universit\`a di Torino,
Dipartimento di Informatica\\
\email{\{beccuti,angius,horvath,balbo\}@di.unito.it}
\and
Universit\`a di Torino,
Dipartimento di Matematica\\
\email{\{roberta.sirovich,enrico.bibbona\}@unito.it}
}

\maketitle


\begin{abstract}


It is well known, mainly because of the work of Kurtz, that density
dependent Markov chains can be approximated by  sets of ordinary
differential equations (ODEs) when their indexing parameter grows very large.
This approximation cannot capture the stochastic nature of the process and,
consequently, it can provide an erroneous view of the behavior of the
Markov chain if the indexing parameter  is not sufficiently high.  Important
phenomena that cannot be revealed include non-negligible variance and
bi-modal population distributions.  A less-known approximation proposed by
Kurtz applies stochastic differential equations (SDEs) and provides
information about the stochastic nature of the process.

In this paper we apply and extend this diffusion approximation to study stochastic
Petri nets.  We identify a class of nets whose underlying stochastic
process is a density dependent Markov chain whose indexing parameter is a
multiplicative constant which identifies the population level expressed by the initial marking
and we provide means to automatically construct
the associated set of SDEs.
Since the diffusion
approximation of Kurtz considers the process only up to the time when it
first exits an open interval, we extend the approximation by a machinery
that
mimics the behavior of the Markov chain at the boundary and allows thus
to apply the approach to
a wider set of problems. The resulting process is of the jump-diffusion
type. We illustrate by examples that the jump-diffusion
approximation which extends to bounded domains can be much more
informative than that based on ODEs as
it can provide accurate quantity distributions even when they are multi-modal
and even for relatively small population levels. Moreover, we show that the method is
faster than
simulating the original Markov chain.

\end{abstract}


\section{Introduction}
Stochastic Petri Nets (SPNs) are a well-known formalism widely used for the
performance analysis of complex Discrete Event Dynamic Systems
\cite{BOABCDF95,Bal-01}.  The advantages of modeling with SPNs include their
well defined time semantics which often allows the direct definition of the
Continuous Time Markov Chain (CTMC) that represents the SPN's underlying stochastic
process whose state space is isomorphic to the reachability set of the net.
The analysis of real  systems often requires the construction of SPN
models with huge state spaces that may hamper the practical relevance of the
formalism and have motivated the development of many techniques capable
of reducing the impact of this problem.  However, when the model includes
large groups of elements (e.g., Internet users,  human populations,
molecule quantities) most of these techniques may turn out to be
insufficient, so that expected values and probability distributions must be estimated with
Discrete Event Simulation \cite{Fish1978,GaetaTSE96}.
An alternative to simulation is the approximation of the stochastic model with a deterministic one in which its
time evolution is represented with a set of Ordinary
Differential Equations (ODEs) whose solution is interpreted as the
approximate expected value of the quantities of interest.

The convergence of the solution of the system of ODEs to the expected
values of their corresponding quantities, when the sizes of the involved
populations grow very large has been the subject of many papers.  Most of
these are based on the work of Kurtz~\cite{Ku70} and have shown that the
accuracy of its approximation is acceptable when the model represents a
system of large interacting population quantities \cite{Tr10,TrGiHi12}.
Unfortunately, there are many cases in which the deterministic
approximation is not satisfactory because the obtained approximate expected
values give little or even erroneous information about the actual
population levels.  This happens either when  the population sizes are not large
enough to rule out the variability of the process and the obtained expected
values are not reliable, or when the population distributions are multi-modal, a case in which the mean does not provides much
information~\cite{JANE}.

In this paper we extend a stochastic approximation of the CTMC that has been introduced in \cite{kurtz1976limit} by Kurtz.
There, the fluidisation is augmented by a suitable noise term
which accounts for the stochasticity of the original system yielding a system of Stochastic Differential Equations
(SDEs). Unfortunately the proposed diffusion approximation is only valid up to the first time the system exits a suitable open domain. In real systems however
the boundaries of the state space are repeatedly visited (the buffer of a queue may get full,
all resources of a system may be in use) and the system can stay there for a finite time and come back to the interior again and again. In these cases, the diffusion approximation proposed in \cite{kurtz1976limit} gives a totally incomplete description. We propose an improved approximation which results in a jump diffusion process that visits the boundaries and with jumps that push back the process to
the interior of its state space.

The paper is organized as follows.  In Section 2, in order to provide the
theoretical background of our exposition, we review some of Kurtz's results
regarding deterministic and diffusion approximations.  In Section 3 we
apply these results to SPNs and extend Kurtz's results to treat the
barriers of the state space.  Numerical experiments are provided in Section
4 and conclusions are drawn in Section 5.

\newcommand{\comAH}[1]{{\color{blue}{AH: #1}}}

\section{Two fluid approximations for density dependent CTMCs}

In this section we give a brief overview of two possible approximations of
density dependent CTMCs.  Both of these are ``fluid'' in the sense that the
state space of the approximating process is continuous.  The first one is
the well-known deterministic fluid limit introduced in \cite{Ku70} which
employs a set of ordinary differential equations.  The second one is the
so-called diffusion approximation introduced in \cite{kurtz1976limit} which
uses SDEs.  The decisive difference between
the two approximations is that, at any time, the first
one provides a single number per random variable of interest (e.g., number of customers in service, number of molecules in a cell), which is usually interpreted
as the approximate expected value, while the second one leads
to an approximate joint distribution of all the variables of interest.

\subsection{Density dependent CTMCs\label{sec:ddCTMC}}

In the following we will denote $\mathbb{R}$, $\mathbb{Z}$ and $\mathbb{N}$
the set of real, integer and natural numbers, respectively. Given a 
positive constant, $r$, we will denote by $\mathbb{R}^{r}$ the $r$--dimensional cartesian
product of the space $\mathbb{R}$. The letter $u$ will be dedicated to the
time index ranging continuously between $[0, +\infty)$ or $[0,T]$ when
  specified. The discrete states of a continuous time Markov chain will be
  denoted as $k$ or $h$ and range in the state space that is included in
  $\mathbb{Z}^{r}$. We will always consider the abstract probability space
  to be given as $(\Omega,\mathcal{F},\mathbb{P})$, where $\mathbb{P}$ is
  the probability measure. Furthermore, $\mathbb{E}$ will denote the
  expectation with respect to $\mathbb{P}$.

\begin{definition}\label{def:density_dependent}%
A family of Markov chains $X^{\BN}(u)$ with indexing parameter $N$ and with state
space $S^{\BN} \subseteq \mathbb{Z}^{r}$, is called \emph{density dependent} iff there
exists a continuous non-zero function $f: \mathbb{R}^r \times \mathbb{Z}^{r} \rightarrow \mathbb{R}$ such that the instantaneous transition rate (intensity) from state $k$
to state $k+l$ can be written as
\begin{equation}\label{eq:form}
q^{\BN}_{k,k+l}=N f\left(\frac{k}{N},l\right), \quad l \neq 0.
\end{equation}
\end{definition}
In the previous definition, the first argument of the function $f$ can be seen as a
normalized state (with respect to the indexing parameter $N$) of the CTMC and the second argument as a vector that
describes the effect of a transition (change of state).  Consequently, eq. (\ref{eq:form})
states that, given a vector $l$, the intensities depend on the
normalized state, $k/N$, and are proportional to the indexing parameter
$N$.  In the following we denote the set of possible state changes by $C$,
i.e., $C=\{l:l \in \mathbb{Z}^{r}, l \neq 0, q^{\BN}_{k,k+l}\neq 0\}$.

Some important models do not satisfy Definition \ref{def:density_dependent} exactly but are still treatable in the same framework. For this reason we introduce the following more general definition.
\begin{definition}\label{def:nearly-density_dependent}%
A family of Markov chains $X^{\BN}(u)$ with parameter $N$ and with state
space $S^{\BN} \subseteq \mathbb{Z}^{r}$, is called \emph{nearly density dependent} iff there
exists a continuous non-zero function $f: \mathbb{R}^r \times \mathbb{Z}^{r} \rightarrow \mathbb{R}$ such that the instantaneous transition rate (intensity) from state $k$
to state $k+l$ can be written as:
\begin{equation}\label{eq:qnearly}
q^{\BN}_{k,k+l}=N \left[ f\left(\frac{k}{N},l\right) + O\left( \frac{1}{N} \right) \right], \quad l \neq 0.
\end{equation}
\end{definition}
\begin{example}\label{ex:ddctmc}
As an example consider a closed network of $r$ infinite server queues with
exponential service time distributions and with $N$ jobs circulating in it.
The state space is thus $S^{\BN}=\{k: k \in \mathbb{Z}^{r}, 0\leq k_i \leq N,
\sum_i k_i=N\}$. Let $\mu_i$ denote the service intensity of the $i$th
queue, $r_{i,j}$ the routing probabilities with $r_{i,i}=0$, and $l_{i,j},
i\neq j,$ a vector with $-1$ in position $i$, $+1$ in position $j$ and 0s
elsewhere.  Then the transition rates and the functions $f$ are

\begin{align}
q^{\BN}_{k,k+l}=\left\{
\begin{array}{ll}
k_i \mu_i r_{i,j} & \mbox{if}~l=l_{i,j} \\
0 & \mbox{otherwise}
\end{array}
\right.
& \quad \quad
f(y,l)=\left\{
\begin{array}{ll}
y_i \mu_i r_{i,j} & \mbox{if}~l=l_{i,j} \\
0 & \mbox{otherwise}
\end{array}
\right.
\end{align}
and thus that the transition rates are in the form given in eq. \eqref{eq:form}, with $y=k/N$.  For
this simple example the transition rates depend only on one component of
the state.  This is not a necessary condition for a CTMC to be
density dependent and in Section~\ref{sec:exp_results} we show examples for which this
condition does not hold.
\end{example}

In order to gain a better understanding of the property of density
dependence, let us introduce some general concepts from the theory of
Markov chains. Among the many books devoted to this topic, we refer the
reader to \cite{bremaud1999}.

For a general Markov chain $M(u)$ with state space $S \subseteq  \mathbb{Z}^{r}$ and instantaneous transition rates $q_{k,h}$, let us introduce the following key object
\begin{align}\label{eq:F}
F_{M}(k) = \sum_{h\in S} (h-k) q_{k,h} \quad k \in S.
\end{align}
The function $F_{M}$ will be referred to as the \emph{generator} of the chain.
Under suitable hypothesis the expectation of $M(u)$ solves the following \emph{Dynkin equation} cf. \cite[Chapter 9, Theorem 2.2]{bremaud1999}
\begin{align}\label{eq:Dynkin}
\frac{d \mathbb{E}[M(u)]}{du} = \mathbb{E}[F_{M}(M(u))].
\end{align}
\begin{example}
For the density dependent CTMC introduced in Example~\eqref{ex:ddctmc}, the
$m$th entry of the generator is
\[
(F_{X^{\BN}}(k))_m=\sum_{i=1,i\neq m}^r k_i \mu_i r_{i,m} - k_m \mu_m
\]
Define by $\pi_k(u)$ the transient probabilities and apply the
Dynkin equation.  We obtain
\begin{align}
\notag
\frac{d ( \mathbb{E}[M(u)])_m}{du}=&
\sum_{k \in E_N} \pi_k(u) \left(\sum_{i=1,i\neq m}^r k_i \mu_i r_{i,m} - k_m \mu_m \right)
\\ \notag
=&\sum_{i=1,i\neq m}^r ( \mathbb{E}[M(u)])_i \mu_i r_{i,m} - ( \mathbb{E}[M(u)])_m \mu_m
\end{align}
which provides a set of ODEs that can be used to calculate the mean queue
length of each queue.  Note that in general it is not the case that
applying the Dynkin equation leads to such ODEs from which the mean
quantities can be directly obtained.
\end{example}

In order to bring to the same scale the state spaces of all the CTMCs
$X^{\BN}(u)$, it may be convenient to
introduce the family of normalized CTMCs as $Z^{\BN}(u)=\frac{X^{\BN}(u)}{N}$,
which is also referred to as the density process. Notice that the density process will have state space $\left\{ \frac{k}{N}, k \in \mathbb{Z}^{r} \right\}$. Two invariance
properties of density dependent CTMCs can  then be stated using the
normalized chains.
\begin{property}\label{rmk:genC}
If the set of possible state changes $C$ does not depend on $N$, the density dependence property of the family $X^{\BN}(u)$ is equivalent to
require that for the family of the normalized CTMCs, $Z^{\BN}(u)$, the
generator does not depend on $N$. Indeed
\begin{align}
\label{eq:gen_cost}
F_{Z^{\BN}}\left(\frac{k}{N}\right) = \sum_{l \in C} \frac{l}{N} \cdot
p^{\BN}_{\frac{k}{N},\frac{k}{N}+\frac{l}{N}} = \sum_{l \in C} \frac{l}{N} \cdot
q^{\BN}_{k,k+l} = \sum_{l\in C} l f\left(\frac{k}{N},l\right) = F \left(\frac{k}{N}\right)
\end{align}
where $p^{\BN}_{\frac{k}{N},\frac{h}{N}}$ and $q^{\BN}_{k,h}$ are the instantaneous transition
rates of the processes $Z^{\BN}$ and $X^{\BN}$, respectively. Hence, if the increments $l$ are constant with $N$, i.e. the set $C$ does not depend on $N$ and comprises elements which are all independent of $N$, the generator is a function that depends on $N$ only through the state $\frac{k}{N}$.
Let us notice that if the family $X^{\BN}(u)$ is only nearly density dependent then
$F_{Z^{\BN}}\left( \frac{k}{N} \right) = F\left(  \frac{k}{N} \right)+ O\left( \frac{1}{N} \right)$.
\end{property}
\begin{property}
Each element of the family $Z^{\BN}$ solves the same Dynkin equation
\begin{align}\label{eq:dynkinZ}
\frac{d \mathbb{E}[Z^{\BN}(u)]}{du} = \mathbb{E}[F(Z^{\BN}(u))].
\end{align}
\end{property}

The results reported in the following two subsections was demonstrated by
Kurtz exploiting the above properties.

\subsection{From CTMCs to ODEs}\label{sec:ctmc2ode}

In~\cite{Ku70} Kurtz has shown that given a nearly density dependent family of CTMCs $X^{\BN}(u)$, if $\lim_{N \rightarrow \infty} Z^{\BN}(0)=z_{0}$, then, under
relatively mild conditions on the generator $F$ given in eq.\eqref{eq:gen_cost},
the density process $Z^{\BN}$ converges (in a sense to be precised) to a deterministic function $z$
which solves the ODE  \footnote{Eq. \eqref{eq:odeK} is equivalent to the form $\frac{dz(u)}{du}=
F(z(u))$. We have chosen the ``differential'' form written in
\eqref{eq:odeK} to be consistent with the notation that will be introduced
in Section \ref{sec:sde} for the stochastic differential equations. }
\begin{align}\label{eq:odeK}
d z(u) &=F(z(u))du, \quad \quad z(0)=z_{0}.
\end{align}
In~\cite{Ku70} the following convergence in probability is used: for every $\delta >0$
\begin{align}\label{eq:convergenceK}
\lim_{N \rightarrow \infty}
\mathbb{P}\left\{
\sup_{u\leq T}\left|
Z^{\BN}(u)-z(u)
\right|>\delta
\right\}=0.
\end{align}
where $T$ is the upper limit of the finite time horizon.

The function $z(u)$ is usually interpreted as the asymptotic mean of the process as it
solves an equation which is analogous to eq. \eqref{eq:dynkinZ}.  The
difference $Z^{\BN}(u)-z(u)$ can  be interpreted as the ``noisy'' part of
$Z^{\BN}(u)$.  It was shown in \cite{Ku70} that for $N \rightarrow \infty$
the density process $Z^{\BN}(u)$ flattens  at its mean value and that the
magnitude of the noise is
\begin{align}
\label{eq:odediff}
Z^{\BN}(u) - z(u)= O\left( \frac{1}{\sqrt{N}} \right).
\end{align}

The result expressed by eq.\eqref{eq:convergenceK} is often used to approximate the density
dependent process $X^{\BN}(u)=NZ^{\BN}(u)$ with the
deterministic function $x^{\BN}(u)=Nz(u)$, in case of a finite $N$. In doing so, this approximation disregards the
noise term which is now of order $\sqrt{N}$ that is small compared with the
order of the mean (that is $N$), but not in absolute terms. Moreover, it ignores
every details of the probability distribution of $X^{\BN}(u)$ except for the
mean. It is easy to see that there are cases, e.g., multi-modal distributions, where the mean gives too
little information about the  location of the probability mass,
cf \cite{JANE}.

Let us stress that the convergence holds only if $\lim_{N \rightarrow
  \infty} Z^{\BN}(0)=z_{0}$, meaning that the corresponding sequence of
initial conditions $X^{\BN}(0)$ needs to grow linearly
with $N$. In particular if $X^{\BN}(u)$ is multivariate, each component should grow
with the same rate.

\subsection{From CTMCs to SDEs}\label{sec:sde}

An approximation of a density dependent family $X^{\BN}$ which preserves its stochastic nature and 
has a better order of convergence was proposed in
\cite{kurtz1976limit,kurtz1978strong}.
It has been shown in \cite{kurtz1976limit} that, given  an open set $S \subset \mathbb{R}^{r}$,
the density process $Z^{\BN}$ can be approximated by the diffusion process
$Y^{\BN}$ with state space $S$ and solution of the following SDE
\begin{align}\label{eq:sde_ctmc}
dY^{\BN}(u)  =
F(Y^{\BN}(u)) du
 + \sum_{l \in C} \frac{l}{\sqrt{N}}  \sqrt{f(Y^{\BN}(u),l)}\;dW_{l}(u)
\end{align}
where $W_{l}$ are independent standard one-dimensional Brownian
motions and $f$ is given in eq. \eqref{eq:form}. The approximation holds up to the first time $Y^{\BN}$ leaves
$S$. A rigorous mathematical treatment of SDEs can be found in
\cite{klebaner}. In the physical
literature the notation $\frac{dW(u)}{du}=\xi(u)$ is often used even if
Brownian motion is nowhere differentiable and $\xi(u)$ is called a
\emph{gaussian white noise}. This SDE approach that goes back to the
already cited \cite{kurtz1976limit,kurtz1978strong} has been applied in
many contexts, e.g., it is used under the name of \emph{Langevin equations}
to model chemical reactions in \cite{gillespie}.

The structure of eq. \eqref{eq:sde_ctmc} is the following: the first term
is the same that appears in eq.~\eqref{eq:odeK}, while the second term
represents the contribution of the noise and is responsible for the
stochastic nature of the approximating process $Y^{\BN}$.  A further relation
between eq.~\eqref{eq:sde_ctmc} and eq.~\eqref{eq:odeK} can be obtained by
considering that the stochastic part of the equation is proportional to
$1/\sqrt{N}$, meaning that as $N \rightarrow \infty$ this term becomes
negligible and $Y_{\infty}(u)$ solves the same ODE written in
eq. \eqref{eq:odeK}.  Let us remark that the construction of such noise is
not based on an ad hoc assumption, but is  derived from the
structure of the generator of the original Markov chain.

As for the relation between the diffusion approximation and the original density
process, in \cite{kurtz1976limit},  it has been proven that, for any finite
$N$, we have
\begin{align}
Z^{\BN}(u)- Y^{\BN}(u)= O\left( \frac{\log N}{N} \right)
\end{align}
which, compared to eq. \eqref{eq:odediff}, is a better convergence rate.
Thus, the processes $NY^{\BN}(u)$ approximates the density dependent CTMCs
$X^{\BN}(u)$ with an error of order $\log N$, much better than the $\sqrt{N}$
of the deterministic fluid approximation.

Finally, let us stress that the approximation is valid only up to the first
exit time from the open set $S$.  For many applications the natural state
space is bounded and closed and the process may reach the boundary of $S$
in a finite time $\tau$ with non-negligible probability.  In such cases,
since the approximating process $Y^{\BN}(u)$ is no longer defined for any $u
\geq \tau$, this approximation is not applicable.  To overcome
this limitation suitable boundary conditions must be set and this problem,
that was considered neither in \cite{kurtz1976limit} nor in
\cite{kurtz1978strong}, will be tackled in Section~\ref{sec:bd}, representing our main contribution.


\section{From SPNs to fluid approximations}\label{sec:SPNfluid}

In this section we first introduce stochastic Petri nets and give a
condition under which their underlying CTMC is density dependent.  Then, we
reinterpret the results discussed in Section~\ref{sec:ctmc2ode} and
\ref{sec:sde} in terms of SPNs.  Finally, we extend the diffusion
approximation to bounded domains by adding jumps to the diffusion that
mimics the behavior of the original CTMC at the barrier.

\subsection{Density dependent SPNs}

Petri Nets (PNs) are bipartite directed graphs with two types of nodes: places and
transitions.  The places, graphically represented as circles, correspond to
the state variables of the system (e.g.,  number of jobs in a queue), while
the transitions, graphically represented as rectangles, correspond to the
events (e.g., service of a client) that can induce state changes.  The arcs
connecting places to transitions (and vice versa) express the relations
between states and event occurrences.  Places can contain tokens (e.g.,
jobs) drawn as black dots within the places. The state of a PN, called
marking, is defined by the number of tokens in each place.  The evolution
of the system is given by the occurrence of enabled transitions, where a
transition is enabled iff each input place contains a number of tokens
greater or equal than a given threshold defined by the multiplicity of the
corresponding input arc.  A transition occurrence, called firing, removes a
fixed number of tokens from its input places and adds a fixed number of
tokens to its output places (according to the multiplicity of its
input/output arcs).

The set of all the markings that the net can reach, starting from the
initial marking through transition firings, is called the Reachability Set
(RS).  Instead, the dynamic behavior of the net is
described by means of the Reachability Graph (RG), an oriented graph whose
nodes are the markings of the RS and whose arcs represent the transition
firings that produce the corresponding marking changes.

Stochastic Petri Nets (SPNs) are PNs where the firing of each transition is assumed to occur after a delay (firing time) from the time it is enabled. In SPNs these delays are assumed to be random variables with negative exponential distributions  \cite{molloy:spn}. Each transition of an SPN is thus associated with a rate that represents the parameter of its firing delay distribution. Firing rates may be marking dependent. When a marking is entered an exponentially
distributed random delay is sampled for each enabled transition according
to its intensity.  The transition with the lowest delay fires and the
system changes marking accordingly.

Here we recall  the notation and the basic definitions  used in the rest of the paper.
\begin{definition}
A stochastic Petri net (SPN) system is a tuple $\mathcal{N}=( P, T, I, O,
  \mathbf{m}_{0}, \lambda )$:
\begin{itemize}
\item $P=\{p_i\}_{1 \leq i \leq n_{p}}$ is a finite and non empty set of {\em places}.
\item $T=\{t_i\}_{1 \leq i \leq n_{t}}$ is a finite, non empty set of {\em transitions} with $P \cap T=\emptyset$.
\item $I, O: P \times T \rightarrow \mathbb{N}$ are the {\em input},
{\em output} functions that define the arcs of the net and that specify their multiplicities.
\item ${\bf m_0}: P \rightarrow \mathbb{N}$
is a multiset on $P$ representing the {\em initial marking},
\item $\lambda: T  \rightarrow \mathbb{R}$ gives the {\em firing
  intensity} of the transitions.
\end{itemize}
\end{definition}

The overall effect of a transition is described by the function $L=O-I$.
The values assumed by the function $I, O$ and $L$ can be collected in
$n_p \times n_{t}$ matrices (which we still call $I, O$ and $L$) whose
entries are $I(p_{i},t_{j}), O(p_{i},t_{j})$ and $L(p_{i},t_{j})$,
respectively.  By $I(t)$ we denote the column of $I$ corresponding to
transition $t$ (the same holds for $O$ and $L$). The matrix $L=O-I$ is
called the \emph{incidence matrix}.

A {\em marking} (or state) ${\bf m}$ is a function ${\bf m}: P \rightarrow
\mathbb{N}$ identified with a multiset on $P$ which can be seen also as a
vector in $ \mathbb{N}^{n_p}$.  A transition $t$ is \emph{enabled} in
marking $\mk$ iff $\mk(p) \geq I(p, t)$, $\forall p \in P $ where $\mk(p)$
represents the number of tokens in place $p$ in marking $\mk$.  Enabled
transitions may \emph{fire}, so that the firing of transition $t$ in
marking $\mk$ yields a new marking $\mk' = \mk -{I(t)}^T+ {O(t)}^T=\mk + {L(t)}^T$.
Marking $\mk'$ is said to be reachable from $\mk$ because of the firing of
$t$ and is denoted by $\mk\fireseq{t} \mk'$.
The firing  of a \emph{sequence} $\si$ of transitions enabled at $\mk$ and  yielding
$\mk'$ is denoted similarly: $\mk\fireseq{\si}\mk'$.

Let $E(\marc)$ be the set of transitions enabled in marking $\mk$.  The
enabling degree of transition $t$ in marking $\mk$ is defined as
\begin{equation}\label{eq:enabling}
\forall t \in E(\marc) : \; e(t,\marc) = \min_{j:I(p_j,t)\not=0} \left\lfloor \frac{\marc(p_j)}{I(p_j,t)} \right\rfloor.
\end{equation}
which implies that a transition $t \in E(\mk)$ is enabled in marking
$\mk-(e(t,\mk)-1)I(t)^T$, but not in marking $\mk-e(t,\mk)I(t)^T$.  The notion
of enabling degree is particularly useful when a transition intensity is
proportional to the number of tokens in the input places of the
transition, i.e., when the transition models the infinite server mechanism
\cite{BOABCDF95}.

In the CTMC that underlies the behavior of an SPN, states are identified
with markings and a change of marking of the SPN corresponds to a change of
state of the CTMC. If we assume that all the transitions of the SPN use an
infinite server policy, then the transition rate from state $\mk$ to state
$\mk'$ in the CTMC can be written as
\begin{equation}\label{eq:qSPN}
q_{\marc,\marc'} = \sum_{t: L(t)=(\marc'-\marc)}\lambda(t) e(t,\marc)
\end{equation}
where $\lambda(t)$ depends only on the transition and the marking
dependence is determined by the enabling degree $e(t,\marc)$.

A vector $\nu \in \mathbb{N}^{n_p}$ is called a \emph{P--semiflow} of the
SPN if it satisfies $\nu \, L = 0$. All P--semiflows of an SPN can be
obtained as linear combination of the P--semiflows that are elements of a
minimal set. Given a marking $\marc$, the quantities $ \marc \, \nu^{T}$
are invariant and hence equal to $ \marc_{0} \, \nu^{T}$ where $\marc_{0}$
is the initial marking.

\begin{proposition}\label{prop:ddSPN}
Let $\mathcal{N}=( P, T, I, O, \mathbf{m}_{0},\lambda )$ be an SPN model
where all places are covered by  P--semiflows and all transitions use an
infinite server policy. Let us consider the family $\mathcal{N}^{\BN}$ of SPN
models with indexing parameter $N$ obtained from $\mathcal{N}$ by considering an
increasing sequence of initial markings $\marc_{0}^{\BN}= N \alpha$ for a
given vector $\alpha$. The corresponding family of CTMCs is nearly density
dependent and the marking in each place has a bound which grows at most linearly with $N$. If the multiplicity of the input arcs are all unitary, the family is also density dependent.
\end{proposition}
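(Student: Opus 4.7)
The plan is to work directly from the explicit formula \eqref{eq:qSPN} for the transition rates of the underlying CTMC, rescale by $N$, and show that the enabling degree function $e(t,\cdot)$ rescales to a continuous function of $k/N$ up to an $O(1/N)$ correction. The boundedness of each place will come from the existence of a covering P--semiflow.

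First I would fix a transition $t$ and a state $k$ of the family $\mathcal{N}^{\BN}$, set $y=k/N$, and analyse
\[
e(t,k) = \min_{j:\,I(p_j,t)\neq 0} \left\lfloor \frac{k(p_j)}{I(p_j,t)} \right\rfloor.
\]
Using the elementary bound $a-1\leq \lfloor a\rfloor \leq a$, one has $\lfloor N y(p_j)/I(p_j,t)\rfloor = N\,y(p_j)/I(p_j,t)+O(1)$, and taking the minimum over $j$ gives $e(t,Ny) = N\,\tilde e(t,y) + O(1)$ where
\[
\tilde e(t,y) := \min_{j:\,I(p_j,t)\neq 0} \frac{y(p_j)}{I(p_j,t)}
\]
is continuous in $y$ as a minimum of finitely many linear functions. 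Summing over all transitions producing the increment $l=L(t)^T$ and dividing by $N$ yields
\[
\frac{q^{\BN}_{k,k+l}}{N} = \sum_{t:\,L(t)=l^T} \lambda(t)\,\tilde e(t,y) + O\!\left(\frac{1}{N}\right),
\]
so setting $f(y,l):=\sum_{t:\,L(t)=l^T}\lambda(t)\,\tilde e(t,y)$ I recover exactly the form \eqref{eq:qnearly} of Definition~\ref{def:nearly-density_dependent}. The function $f$ does not depend on $N$, and the set of increments $C=\{L(t)^T:t\in T\}$ is an $N$--independent finite subset of $\mathbb{Z}^{n_p}$, both by construction. Under the additional hypothesis that every input arc has multiplicity one, $I(p_j,t)\in\{0,1\}$ and the floor becomes trivial, i.e.\ $e(t,k)=\min_{j:I(p_j,t)=1}k(p_j)$, so the $O(1)$ correction disappears and the rate is exactly $N f(y,l)$, which is the density dependent form \eqref{eq:form}.

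For the linear bound on the markings, I would use the covering assumption: for every $p\in P$ there is a minimal P--semiflow $\nu$ with $\nu(p)>0$. Since $\nu L=0$, the quantity $\marc\cdot \nu^T$ is invariant along the firings of $\mathcal{N}^{\BN}$, hence for any reachable marking $\marc$
\[
\nu(p)\,\marc(p) \leq \marc\cdot \nu^T = \marc_0^{\BN}\cdot \nu^T = N\,\alpha\cdot\nu^T,
\]
which gives $\marc(p)\leq N\,\alpha\cdot\nu^T/\nu(p)$, linear in $N$ as claimed. This bound also ensures that the rescaled state $k/N$ stays in a fixed compact set, so the continuity of $f$ on the relevant domain suffices.

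The only delicate point is the handling of the floor in the enabling degree, because one must be careful that taking the minimum of $r$ numbers each within $1$ of a target value still gives an $O(1)$ error (rather than something blowing up with the number of input places): this is the elementary observation that $\min$ is $1$-Lipschitz with respect to $\ell^\infty$ perturbations. Once that is settled, the rest is essentially collecting terms and invoking the invariants, and the density dependent case under unit arc multiplicities follows by simply noting that the perturbation vanishes identically.
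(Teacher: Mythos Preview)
Your proof is correct and follows essentially the same route as the paper's: both arguments rewrite the floored enabling degree as the unfloored quotient plus an $O(1)$ perturbation, pass through the minimum, factor out $N$, and then invoke the P--semiflow invariants for the linear marking bound. Your explicit remark that $\min$ is $1$-Lipschitz in $\ell^\infty$ is the one small clarification the paper leaves implicit when it replaces $\min_j\{a_j-\epsilon_j\}$ by $\min_j\{a_j\}+O(1/N)$; otherwise the two arguments coincide.
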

\begin{proof}
Let $\{ \nu^{(\eta)} \}_{\eta=1}^{\kappa}$ denote a minimal set of
P--semiflows for $\mathcal{N}$ and, consequently, for any $\mathcal{N}^{\BN}$
(P--semiflows are independent of the initial marking). For each place the
marking is bounded by
\begin{align}\label{eq:bound}
\marc(p_{j}) \leq \min_{\eta : \nu^{(\eta)}_{j} \neq 0} \left\{ \frac{\marc^{\BN}_{0} \nu^{(\eta) \, T}}{\nu^{(\eta)}_{j}} \right\}= N \min_{\eta : \nu^{(\eta)}_{j} \neq 0} \left\{ \frac{\alpha \nu^{(\eta) \, T}}{\nu^{(\eta)}_{j}} \right\},
\end{align}
where $\nu^{(\eta)}_{j}$ is the component of the P--semiflow vector
$\nu^{(\eta)}$ that corresponds to the place $p_{j}$. By eq. \eqref{eq:qSPN}
and \eqref{eq:enabling}
\begin{align*}
q_{\marc,\marc+l} = \sum_{t: L(t)=l} \lambda(t) \min_{j:I(p_j,t)\not=0} \left\lfloor \frac{\marc(p_j)}{ I(p_j,t)} \right\rfloor.
\end{align*}
For any $j$ we have
\[  \left\lfloor \frac{\marc(p_j)}{ I(p_j,t)} \right\rfloor = \frac{\marc(p_j)}{ I(p_j,t)} - \frac{R_{j}}{I(p_j,t)}, \]
where $R_{j}< I(p_{j},t)$ is the remainder of the division of $\marc(p_{j})$ by $I(p_{j},t)$. So
\begin{align*}
q_{\marc,\marc+l} =&  N \sum_{t: L(t)=l} \lambda(t)  \min_{j:I(p_j,t)\not=0} \left\{  \frac{\marc(p_{j})}{NI(p_{j},t)} - \frac{R_{j}}{N} \right\} \\
= & N \left[ \sum_{t: L(t)=l} \lambda(t)  \min_{j:I(p_j,t)\not=0} \left\{  \frac{\marc(p_{j})}{NI(p_{j},t)} \right\} + O\left( \frac{1}{N} \right) \right].
\end{align*}
By Definition
\ref{def:nearly-density_dependent}, the proposition follows. Notice that when the input arcs are all unitary the remainder is null, $R_{j}=0$ for any $j$, and hence the term disappears and the chain is density dependent.
\end{proof}

Let us notice that the hypothesis $\marc_{0}^{\BN}= N \alpha$ implies that
the initial number of tokens in any place should grow with the same rate
$N$.  This means that a family of systems with finite resources cannot be
modeled in this framework.

As shown in the proof of Proposition \ref{prop:ddSPN}, for an SPN in which
all places are covered by a P--semiflow, the number of tokens in a place is
bounded.  The minimal and maximal number of tokens in place $p_i$ will be
denoted by $ \mbox{MIN}(p_{i})$ and $ \mbox{MAX}(p_{i})$, respectively. If $S^{\BN}$ denotes the state space of the CTMC associated to the SPN $\mathcal{N}^{\BN}$, then
\[
 \mbox{MIN}(p_{i})=\min_{x \in S^{\BN}} \; x_{i},~
 \mbox{MAX}(p_{i})=\max_{x \in S^{\BN}} \; x_{i},
\]
where $x_{i}$ denotes the $i$--th component of the vector $x \in S^{\BN}$.
The following set is a
(possibly improper) superset of $S^{\BN}$
\begin{equation}
\label{eq:statespaceX}
\hat{S}^{\BN}=\left\{ x \in \mathbb{Z}^{n_{p}} : \forall i,\mbox{MIN}(p_{i})\leq x_{i} \leq
\mbox{MAX}(p_{i}) \mbox{ and } \forall \eta, x \nu^{(\eta)T} = \marc_{0}\nu^{(\eta)T} \right\}.
\end{equation}
which will help us to define the state space of the approximate models.

\subsection{From SPNs to ODEs}\label{ODE}

Having characterized, in Proposition~\ref{prop:ddSPN}, a family of SPNs
whose underlying CTMC family is nearly density dependent, we can apply the results
reported in Section~\ref{sec:ctmc2ode} to construct an approximation by
ODEs.  In order to provide a direct relation between the SPN family and the
ODE approximation, we construct the approximation not for the normalized
process $Z^{\BN}$, as it was done by Kurtz, but for the unnormalized
$X^{\BN}$.

The state of the system in the ODE approximation will be denoted by  $x^{\BN}(u) \in \mathbb{R}^{n_{p}}$.  A given infinite server transition $t_i$
moves ``fluid'' tokens in state $x^{\BN}(u)$ with speed
\begin{equation}
\label{eq:speed}
\sigma(t_i,x^{\BN}(u))=\lambda(t_i)\min_{j:I(p_j,t_i)\not=0}\frac{x^{\BN}_j(u)}{I(p_j,t_i)}
\end{equation}
which depends on the rate of the transition, $\lambda(t_i)$, and on its
{\em enabling degree}, calculated now at the ``fluid'' state.  The
number of tokens in the $i$-th place is then approximated by the following
ODE system:
\begin{equation}\label{eq:ode1}
dx^{\BN}_i(u)=\sum_{j=1}^{n_{t}} \sigma(t_j,x^{\BN}(u)) L(p_i,t_j) du
\end{equation}
so that if place $p_i$ is an input (output) place of transition $t_j$, then
transition $t_j$ is removing (adding) tokens from (to) place $p_i$
according to the current speed of the transition and the multiplicity given
by function $I$ ($O$).

It is easy to see that the ODE approximation maintains the same invariance
properties, expressed by the P--semiflows, that the original SPN enjoys.
Therefore the ODE system is redundant and each P--semiflow in the minimal
set can be used to derive one of its components from the others and hence
the system can be reduced by as many equations as the number of P--semiflows in the minimal set.

The fluid state space $\hat G^{\BN} \in \mathbb{R}^{n_{p}}$ of the deterministic
approximation $x^{\BN}$ is
\begin{equation}\label{fluidstatespace}
\hat G^{\BN}=\left\{ x \in \mathbb{R}^{n_{p}} : \forall i, \mbox{MIN}(p_{i})\leq x_{i} \leq
\mbox{MAX}(p_{i}) \mbox{ and }
\forall \eta, x {\nu^{(\eta)}}^T = \marc_{0}{\nu^{(\eta)}}^T \right\}
\end{equation}
that is the convex hull of the set $\hat{S}^{\BN}$ given in
eq. \eqref{eq:statespaceX}.  Theoretically, the fluid approximation may
visit all the points between the boundaries that are compatible with the
P--semiflows and the initial markings (gaps are filled).  In practice,
however, as the model follows a deterministic trajectory, only a $n_{p}$
dimensional curve is covered.  Moreover, if the initial marking belongs to
the interior of $\hat G^{\BN}$, denoted by $\mathring{G}^{\BN}$, where all the
components are strictly between the bounds, then the approximation
cannot visit the boundary defined as
\begin{equation}\label{eq:statespaceboundary}
\Gstar^{\BN} = \{ x \in \hat G^{\BN} : x_{i} =  \text{MAX}(p_{i}) \text{ or } x_{i}=\text{MIN}(p_{i}), \text{for at least one }i.  \}
\end{equation}


\subsection{From SPNs to SDEs}\label{SDE}

Under the hypothesis of density dependence, the procedure
illustrated in Section \ref{sec:sde} can be applied in this case too, and the number of
tokens in each place can be approximated by the diffusion process
$\Upsilon^{\BN}(u)=NY^{\BN}(u)$, with components $\Upsilon^{\BN}_{i}$ given by
the following system of SDEs:
\begin{align}\label{eq:sde}
d\Upsilon^{\BN}_{i}(u)  =
\sum_{j=1}^{n_{t}} \sigma(t_j,\Upsilon^{\BN}(u)) L(p_i,t_j) du
 + \sum_{j=1}^{n_{t}}\sqrt{ \sigma(t_j,\Upsilon^{\BN}(u)) }L(p_i,t_j) dW_{j}(u)
\end{align}
where each $W_{j}$ is an independent one-dimensional Brownian motion.

Just like the ODE approximation, also the diffusion approximation enjoys the
invariance properties present in the SPN.  Thus the size of the system can be reduced
 by removing one equation for each minimal  P--semiflow.

As already remarked, the diffusion approximation $Y^{\BN}$ introduced in
Section \ref{sec:sde} is defined only up to the first exit from an open set
and the same holds for $\Upsilon^{\BN}$.  The natural state space for the
process $\Upsilon^{\BN}$ would be the closed set $\hat G^{\BN}$ as defined in
eq. \eqref{fluidstatespace}.  However the diffusion approximation is only
valid up to the first exit from $\mathring{G}^{\BN}$. If the underlying CTMC is only nearly density dependent a suitable extension of approximation \eqref{eq:sde} can still be applied.

\subsection{From SPNs to SDEs in bounded domains}\label{sec:bd}

We have already stressed that for an SPN model satisfying the hypothesis of
Proposition \ref{prop:ddSPN} the number of tokens in each place is bounded
between a minimal and a maximal value. In many real examples the CTMC
visits some of the states corresponding to a minimal/maximal marking quite
often. From a modeling point of view this is not surprising: in Section
\ref{sec:exp_results} we revise the epidemiological
Susceptible-Infectious-Recovered (SIR) model where it is natural (at
least for some ranges of parameters) that the number of infected people
is zero for a non-negligible time. We claim that for this kind
of processes, both the fluid limit in eq. \eqref{eq:ode1} and the diffusion
process in eq. \eqref{eq:sde} fail to give satisfactory approximations of
the original process $X^{\BN}(u)$ for finite $N$. Numerical evidence of this
fact is given in Section \ref{sec:exp_results}, Table \ref{tab:exp1}.

An explanation for such a failure is that, when the (possibly improper)
state space $\hat{S}^{\BN}$ of the chain is embedded into the fluid state
space $\hat G^{\BN}$, each state where the marking of a place is minimal or
maximal is mapped to the boundary $\Gstar^{\BN}$.  While for finite $N$
the original CTMC may visit the ``boundary'' often and eventually stay
there for a long time, the deterministic fluid limit of Section \ref{ODE}
always remains in the interior.  On the other hand the diffusion
approximation of Section \ref{SDE} is valid only up to the first time it
leaves $\mathring{G}^{\BN}$ and hence cannot give a good approximation.

To overcome these difficulties, we abandon the ODE approach and propose
here a new approximation that improves the SDE method by carefully
introducing a suitable behavior at the boundaries with which we mimic the
original chain.  In the original CTMC, indeed, the system is described as a
pure jump process with discrete events which perfectly takes into account
the ``boundaries''.  The SDE approximation of Section~\ref{SDE} is a less
complex continuous model, but it does not deal with the behavior at the
boundaries. The new approximation we propose aims at taking the best from
both: time by time that part of the system which is not at the boundary is
still fluidified with the SDE approach and the rest, which involves the
boundaries, are kept discrete.  In particular the resulting process is a
jump--diffusion $\tilde\Upsilon^{\BN}(u)$ that we will describe  in
details.

If the process is initialized in $\mathring{G}^{\BN}$, the new approximating process $\tilde\Upsilon^{\BN}(u)$
evolves as the diffusion $\Upsilon^{\BN}(u)$ up to the first time it reaches $\Gstar^{\BN}$. From now on,  the set of places $P$  is (dynamically,
depending on the current value of $\tilde\Upsilon^{\BN}(u)$) split into those
that are at the boundary
\[
\Pstar = \{p_{i} \in P : \tilde\Upsilon^{\BN}_{i}(u)= \text{MAX}(p_{i}) \text{ or } \tilde\Upsilon^{\BN}_{i}(u)=\text{MIN}(p_{i})   \}
\]
and those that remain in the interior $\mathring{P}=P-\Pstar$.
At the same time we (dynamically) split the set $T$ of the transitions into
those that might move one of the components currently at the boundary
\begin{equation}\Tstar=\{t\in T: \exists p_{i} \in \Pstar\text{ such that }\ L(p_{i},t)\neq 0 \}\label{vecT}\end{equation}
and those in $\Tring=T-\Tstar$ that do not affect the places currently in
$\Pstar$.  As far as the transitions in $\Tstar$ do not fire, the
subsystem made of the places in $\mathring{P}$ and the transitions in $\Tring$
can still be approximated by a fluid SDE system (in the reduced state space
that does not include the components in $\Pstar$) whose equations are
analogous to eq. \eqref{eq:sde} except that the sums are
restricted to the transitions in $\Tring$. The transitions in $\Tstar$ cannot
be fluidified since they include the dynamics of the components at the
boundary. We keep them discrete and we encode them into a jump process
which is responsible for all the events of the type ``place $p_{i}$ leaves
the boundary''. The amplitudes and the intensities of the jumps are
formally taken from the original CTMC and depend on the complete state of
the process.  Finally, the approximating
jump-diffusion $\tilde\Upsilon^{\BN}(u)$ which embodies both the fluid
evolution and the discrete events solves the following system of SDEs
\begin{align}\label{eq:Jsde}
d \tilde\Upsilon^{\BN}(u) = \hspace{-2mm}
\sum_{j:\;t_{j} \in \Tring} \hspace{-1mm}L(t_{j})\left(\sigma(t_{j},\tilde\Upsilon^{\BN}(u)) du
 +\sqrt{ \sigma(t_j,\tilde\Upsilon^{\BN}(u)) } \; dW_{j}(u)\right)+ \hspace{-2mm}\sum_{i:\;t_{i} \in \Tstar} \hspace{-1mm}L(t_{i})dM^{\BN}_{i}(u)
\end{align}
where  $M^{\BN}_{i}(u)$ is the counting process that describes how many times transition $t_{i}$ has fired in the time interval $(0,u]$ and whose intensity is given by
\[
\mu_{i}(\tilde\Upsilon^{\BN}_{j}(u-))=\lambda(t_{i}) \min_{j:I(p_j,t_i)\not=0} \left\lfloor \frac
{\tilde\Upsilon^{\BN}_{j}(u-)}{ I(p_j,t_{i})} \right\rfloor
\]
which depends on the actual state of the process $\tilde\Upsilon^{\BN}$ right
before the jump.

Equation \eqref{eq:Jsde} has a component for each place in the net. This
might seem contradictory with the description we have given above according
to which only the places in $\mathring{P}$ are fluidified. Let us however
remark that the fluid increments in the first sum of equation
\eqref{eq:Jsde} do not affect the component at the boundary since if
$t_{j}\in \Tring$ then $L(t_{j},p_{i})=0$ for any $p_{i}$ in $\Pstar$:
their dynamics is solely affected by the jumps. On the other hand a
component that is not at the boundary at a given time can reach it due to the continuous compounding of the fluid increments that
sums up with the effect of the jumps.

Equation \eqref{eq:Jsde} is self-explanatory, but not rigorous since the
splitting of $T$ depends on the state and needs to
be updated continuously with the evolution of the system. According to the
definition \eqref{vecT}, at any time $u$ we can decide if a transition $t$
belongs to $\Tring$ by looking at the state $\tilde\Upsilon^{\BN}(u)$ of the
system and calculating the following quantity
\[
\theta\big(t,
\tilde\Upsilon^{\BN}(u)\big)=\sum_{i=1}^{n_p}\left|L(p_{i},t)\right| \cdot
\mathbb{1}\big(\tilde\Upsilon^{\BN}_{i}(u)= \text{MAX}(p_{i}) \text{ or }
\tilde\Upsilon^{\BN}_{i}(u)=\text{MIN}(p_{i})\big)
\]
where $\mathbb{1}(\cdot)$ is the indicator function of the event in
parentheses.

If $\theta\big(u, \tilde\Upsilon^{\BN}(u)\big)=0$ then $t\in
\Tring$. Accordingly, equation \eqref{eq:Jsde} can be recast into the following
form

\begin{align}\label{eq:Jsde}
d \tilde\Upsilon^{\BN}(u) =& \hspace{-1mm}
\sum_{j=1}^{n_{t}} \mathbb{1}\big\{\theta\big(t_{j}, \tilde\Upsilon^{\BN}(u)\big)=0\big\}\: L(t_{j})\left(\sigma(t_{j},\tilde\Upsilon^{\BN}(u)) du +\sqrt{ \sigma(t_j,\tilde\Upsilon^{\BN}(u)) } \; dW_{j}(u)\right)\notag\\
+& \hspace{-1mm}\sum_{j=1}^{n_{t}} \mathbb{1}\big\{\theta\big(t_{j}, \tilde\Upsilon^{\BN}(u)\big)\neq0\big\}\: L(t_{j})\:dM^{\BN}_{j}(u).\notag
\end{align}

\begin{algorithm}[tbp]
\caption{Algorithm for solving SDE systems}
\label{algo1}
\begin{algorithmic}[1]
\Function{SolveSSDE}{$\emph{SSDE},\emph{step},\emph{MaxRuns},\emph{FinalTime}$}
\footnotesize
\Statex  $\emph{SSDE}$ = SDE system.
\Statex  $\emph{step}$ = step used in the Euler-Maruyama solution.
\Statex  $\emph{MaxRuns}$ = maximum number of runs.
\Statex $\emph{FinalTime}$ = maximum time  for each run.
\Statex $\emph{Value}$= a matrix encoding for each run the SDE value at the current step.
\footnotesize
\State \emph{run}=1;
\State \emph{SSDE.Init(Value)};\label{line1}
\While {($\emph{run} \leq \emph{MaxRuns})$}
	\State \emph{u} = 0.0;
	\While {($\emph{u} \leq \emph{FinalTime})$}\label{line2}
		\State \emph{Value[run].Copy(PrValue)};
		\State \emph{$h$} = step;
	    \State \emph{$\mathring{P}$ = SSDE.splitP(Value)};\label{line3}
	    \State \emph{$\langle \Tstar,\Tring \rangle$=$\mathring{P}$.SplitT()};\label{line4}
		\State \emph{$\langle t,h \rangle$ = $\Tstar$.Jump(step)};\label{line5}
		\For{($\emph{SDE} \in \emph{SSDE}$)}
			\State $\emph{SDE.computeJump(t,$h$,Value[run],PrValue)}$;
			\State $\emph{SDE.computeEuler-Maruyama(Value[run],PrValue,$\Tring$,$h$)}$;
		\EndFor
		\State \emph{SSDE.Norm(Value)};
		\State \emph{u} += $h$;
	\EndWhile{}\label{line6}
	\State \emph{run}++;
\EndWhile
\State \textbf{return} \emph{Value}.distribution();
\EndFunction
\end{algorithmic}
\end{algorithm}

Now we provide in Algorithm~\ref{algo1}  a pseudo-code which describes the implementation of the solution of this  approximating  jump-diffusion,  which extends the standard Euler-Maruyama method~\cite{kloeden1992} in order to solve SDEs  in which  fluid evolution and  discrete events coexist.
We assume that the reader is familiar with  the standard Euler-Maruyama method; for an  its complete description the reader can refer to~\cite{kloeden1992}.

The algorithm takes in  input the model definition represented by the system of SDEs (i.e., \emph{SSDE}) corresponding to  Eq.~\eqref{eq:Jsde},
the maximum
Euler-Maruyama step (i.e., \emph{step}), the maximum number of runs (i.e.,
\emph{MaxRuns}), and a final time (i.e., \emph{FinalTime}) for which the
solution is computed; and it returns the distribution of the variables
at~\emph{FinalTime} time.  A floating point matrix (i.e. \emph{Value}) of
dimensions $\emph{MaxRuns} \times |\emph{SSDE}|$ is used to store for each
run the current value of the variables.  In line~\ref{line1} the method
\emph{init()} initializes the matrix \emph{Value} according to the initial
state, so that each run will start from the same initial values.  Then for
each run, the SDEs are recursively solved until the final solution time
(i.e., \emph{FinalTime}) is reached (from line ~\ref{line2}
to~\ref{line6}).  As previously described the SDEs solution at time $u$
requires some steps: first the set of places is split in the two sub-sets
$\mathring{P}$ and $\Pstar$ (i.e. line~\ref{line3}, method
\emph{SplitP()}), secondly the sets $\Tstar$ and $\Tring$ are computed
(i.e. line~\ref{line4}, method \emph{SplitT()}).  After that the method
\emph{Jump()} in line~\ref{line5} is called to model the jump process which
is responsible for all the events moving a place outside its
boundaries. Its output is a tuple $ \langle t,h\rangle $ which specifies
which transition $t\in\Tstar$ will fire at time $u+h$ with $h\leq
step$. Obviously in a time interval $(u,u+step)$ no discrete events could
happen (even if $\Tstar\neq \emptyset$), in this case the method
\emph{Jump()} returns the tuple $ \langle -1,step\rangle $.  Then, for each
equation in \emph{SSDE} the method \emph{computeJump()} updates the
variables considering the discrete event modeled by $t$; while the method
\emph{computeEuler-Maruyama()} updates the variables considering the fluid
evolution accounting  only for the events in $\Tring$.  In the end of each
step the method \emph{Norm()} is called to normalize all the new computed
variables taking into account the P-invariants.  Finally the method
\emph{distribution()} generates from the final value of the variables the
distribution of all the involved quantities.

\section{Experimental results}\label{sec:exp_results}

\begin{figure}[tbp]
   \centering
   \includegraphics[width=1.0\textwidth]{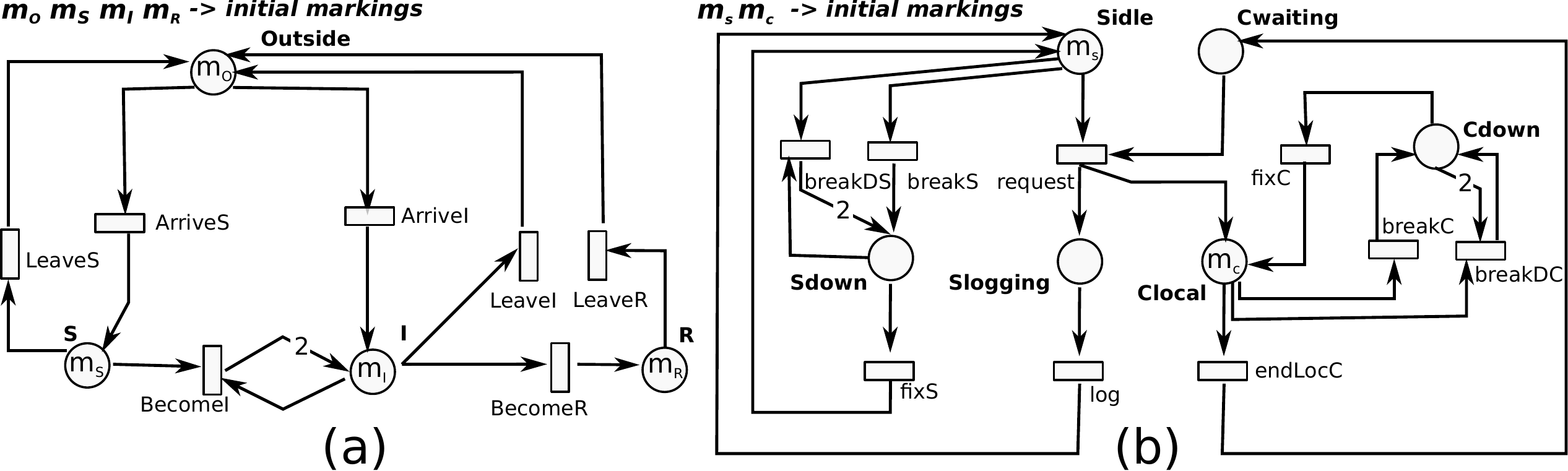}
   \caption{(a) SPN net inspired by SIR model;\ \ \ \  (b) SPN model inspired by PEPA client-server model}\label{Fig:Model}
\end{figure}

In this section we report the results obtained from the analysis of  two SPN models, to show the quality and the robustness of the approximations obtained with our new approach.
The first model depicted in Fig.~\ref{Fig:Model}(a)  is inspired by the  epidemiological Susceptible-Infectious-Recovered (SIR) mathematical representation of this problem originally introduced in~\cite{SIR27} and discussed with respect to some of its variants in~\cite{Beccuti12,Beccuti13}.
It describes the diffusion of an epidemic on a large population, and assumes  that the population members  are part of three sub-populations according to their health status: (a) \emph{susceptible members} (represented in the model by tokens in place \emph{S}) that  are not ill, but that are susceptible to the disease; (b) \emph{infected members} (i.e., tokens in place \emph{I}) that are subject to the disease and can spread it among susceptible members; (c) \emph{recovered members} (i.e., tokens in place \emph{R}) that were previously ill and   are now immune.
Each member of the population typically progresses from susceptible to infectious (i.e., firing of transition \emph{BecomeI}) depending on   the number of  infected members, and  from infectious to recovered (i.e., firing of transition \emph{BecomeR}).
Moreover,  members of the population can leave the infected area (i.e., firing of transitions  \emph{LeaveS}, \emph{LeaveI} and  \emph{LeaveR}) to reach the ``outside world'' (i.e., place \emph{Outside}) and  members of the outside world can enter into the infected area joining the sub-population of the susceptible (i.e., transition \emph{ArriveS}) or infected  members (i.e., transition \emph{ArriveI}).

The second model represented in Fig.~\ref{Fig:Model}(b) is derived from the the PEPA representation of a Client-Server  system studied in~\cite{JANE}.
It describes an environment where servers, initially idle (tokens in  place \emph{Sidle}), are waiting for a client synchronization (represented by transition \emph{request}).
When the synchronization is completed, the client returns to its local computation
(i.e., place \emph{Clocal}) until the occurrence of the next synchronization  (i.e., transition \emph{endLocCl}), while the server executes an action log before becoming idle again (i.e., transition \emph{log}).
An idle server or a client in local computation may fail due to a virus infection (i.e., transitions \emph{breakS} and \emph{breakC}). Moreover an infected server or client can also infect other machines (i.e., transition \emph{breakDS} and \emph{breakDC}).
Finally a server or a client  recovers only when  an anti-malware software discovers the virus (transitions \emph{fixS} and \emph{fixC}).
\begin{figure}[tbp]
   \centering
   \includegraphics[width=0.45\textwidth]{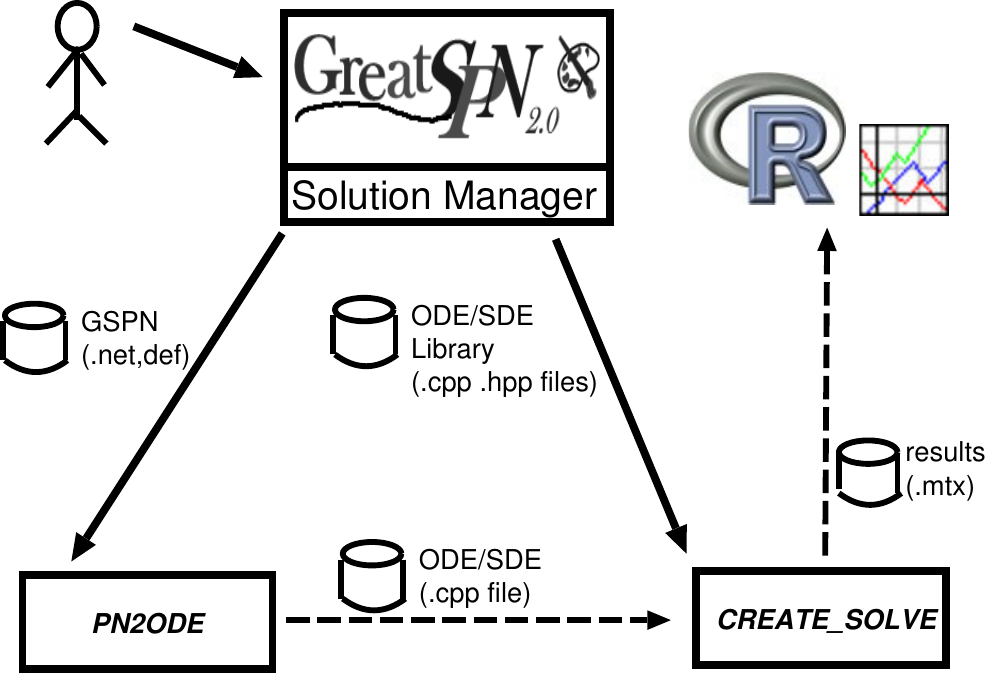}
   \caption{Framework architecture.}\label{Fig:Arc}
\end{figure}
All the experiments  performed on these two models have been carried on with a prototype implementation integrated  in GreatSPN framework~\cite{BabarBDM10}, which allows the generation of the ODE/SDE system from an SPN model and then the computation of its solution.
The architecture of this prototype is depicted in Fig.~\ref{Fig:Arc} where the framework components are presented by
rectangles, the component invocations by solid arrows, and the models/data exchanges by dotted arrows.
More specifically, GreatSPN is used to design the SPN model and  to activate the solution process, which comprises the following two steps:\\
1. \emph{PN2ODE} generates from an SPN model a  C++ file implementing the corresponding ODE/SDE system;\\
2. \emph{CREATE\_SOLVE}  compiles the previously generated C++ code  with the library implementing the SDE/ODE solvers, and executes it.\\
Finally the computed results are processed through the R framework to derive statistical information and  graphics. All the results have been obtained running these programs
on a 2.13 GHz Intel I7 processor with 8GB of RAM.

\begin{table}
\centering \scriptsize
\begin{tabular}{|c|r|r||c|r|}
\hline
 \multicolumn{3}{|c||}{\textbf{SIR model}} & \multicolumn{2}{|c|}{\textbf{Client Server model}}\\
  \hline
	 \multicolumn{1}{|c|}{\textbf{Transition}} &
         \multicolumn{1}{c|}{\textbf{rate (1$^\circ$ exp.)}}  &\multicolumn{1}{c||}{\textbf{rate (2$^\circ$ 3$^\circ$ exp.)}}&
          \multicolumn{1}{|c|}{\textbf{Transition}} &
         \multicolumn{1}{c|}{\textbf{rate}}\\
\hline\hline
\emph{ArriveS} & 0.5& 1.0& \emph{log} & 12\\
 \hline
\emph{ArriveI} & 0.5& 0.01& \emph{request} & 1\\
 \hline
\emph{BecomeI} & 1.0& 1.0& \emph{endLocC} & 0.2\\
\hline
\emph{BecomeR} & 0.5& 0.5& \emph{breakS}, \emph{breakC} & 0.0007 0.00002\\
\hline
\emph{LeaveS}, \emph{LeaveR} & 0.02& 0.02& \emph{breakDS}, \emph{breakDC} & 0.8 1.4\\
\hline
\emph{LeaveI} & 0.1& 0.1&\emph{fixS}, \emph{fixC} & 0.001 0.001\\
\hline
\end{tabular}
\vspace{0.2cm}
 \caption{Transition rates.}\vspace{-0.8cm}
\label{tab:rate}
\end{table}

In the first set of experiments performed on the model of Fig.~\ref{Fig:Model}(a), we consider a  situation in which the effect of barriers does not influence the correctness/quality  of  the solution computed by ODEs, so that we are able to properly compare the ODE approach with our new one. For this purpose, we assume that the initial marking of the model corresponds to a total of
200 people equally distributed over all the places of the model ($m_{O} = m_{S} = m_{I} = m_{R} = 50$)\footnote{This choice is done to avoid the initial barrier effect due to empty places.} and that the transition rates are chosen as reported in the second column of Table~\ref{tab:rate}.
Observe that we consider only 200 people since we want to compare the results obtained by the ODE and SDE approaches with those derived by solving the CTMC underlying the same SPN model\footnote{The generation of a CTMC from a SPN model and its solution are obtained using the GSPN solvers  available in  the GreatSPN suite.}, and because we want to stress the reliability of these methods even for cases of large, but not infinite, populations as it would instead be requested by Kurtz's theorem for ensuring the accuracy of the approximations.
\begin{figure}[h]
\subfigure[Temporal behavior of the  SIR members computed solving ODEs.]{
 \includegraphics[width=0.45\textwidth]{./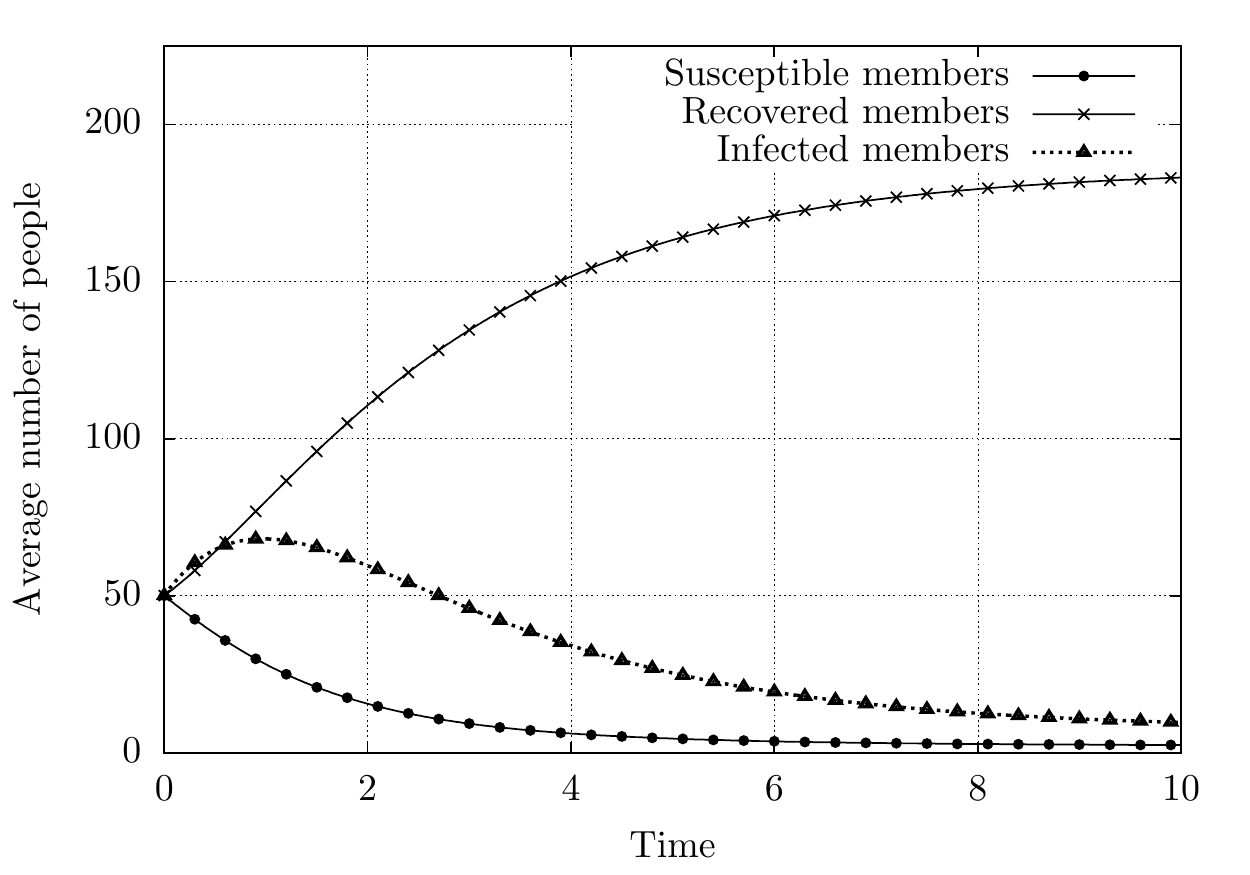}}
\subfigure[Comparison between SDE traces (solid lines) and ODE one (dashed line)]{
 \includegraphics[width=0.45\textwidth]{./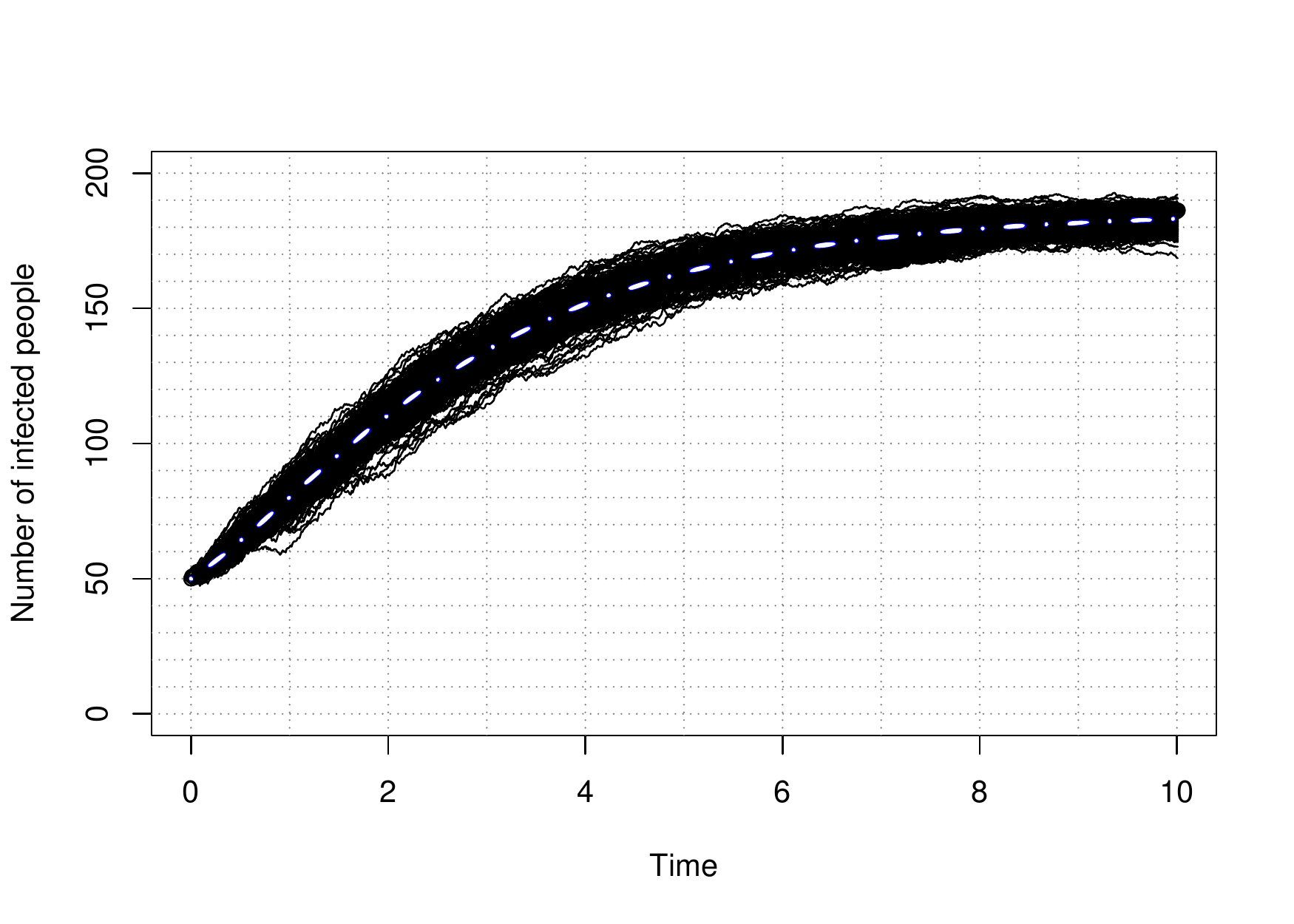}}
\\
\subfigure[Comparison between average SDE traces (black line) and   ODE one  (red line)]{
\includegraphics[width=0.45\textwidth]{./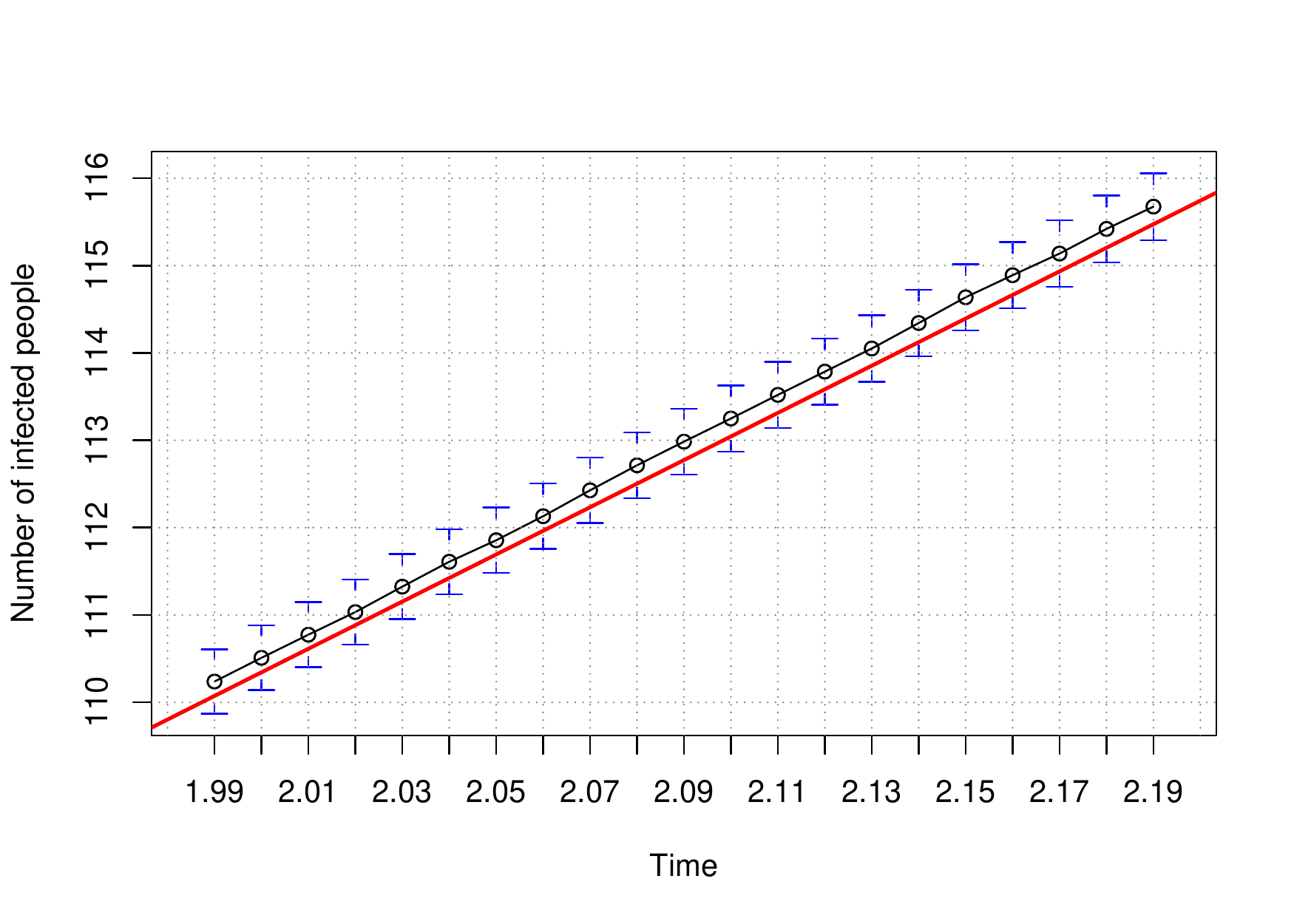}}
\subfigure[Comparison between SDEs and CTMC for infected members.]{
 \centering
   \includegraphics[width=0.45\textwidth]{./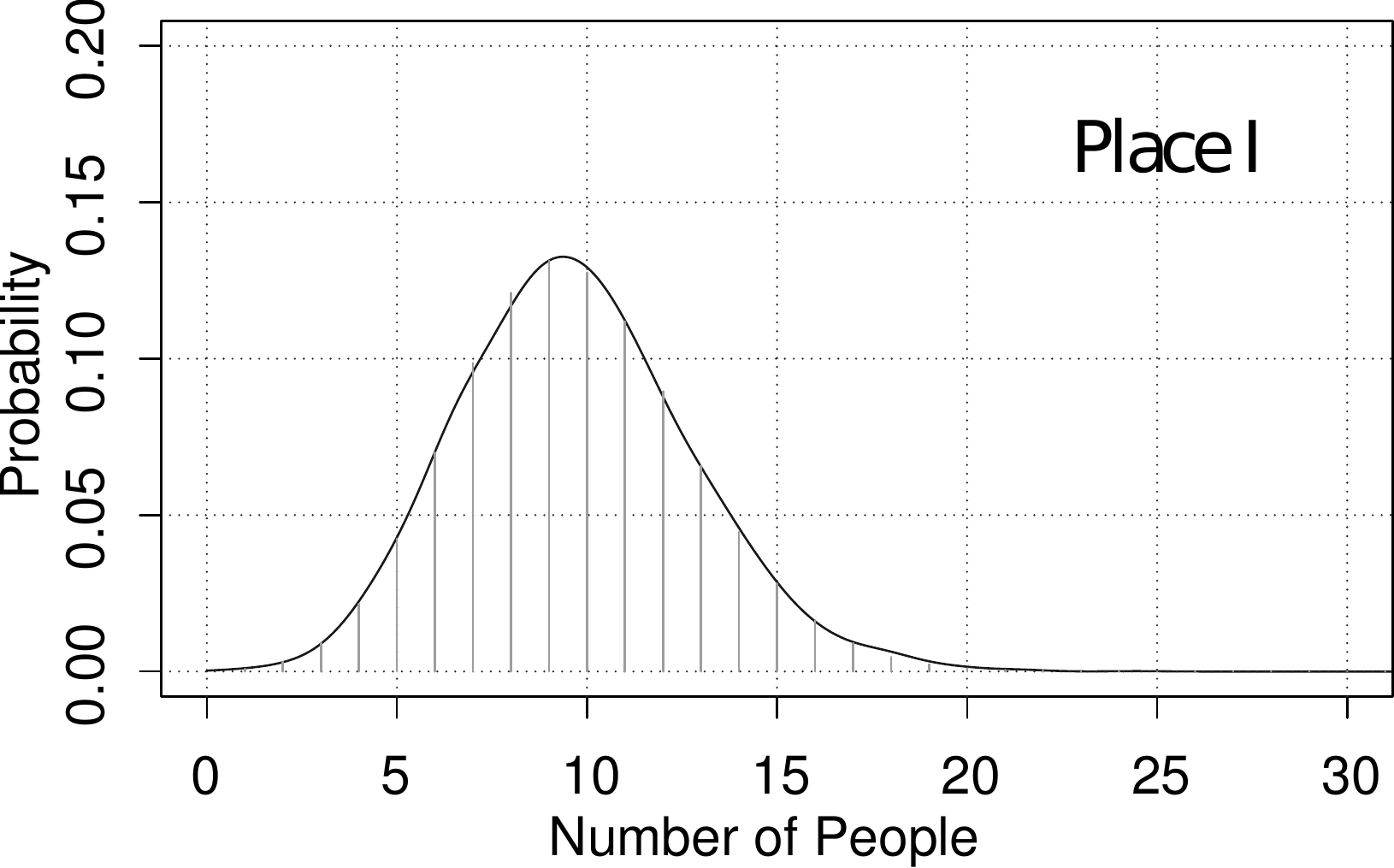}}
\caption{First SIR experiment}\vspace{-0.5cm}\label{Fig:2}
\end{figure}
Fig.~\ref{Fig:2}(a)  shows the temporal behavior (i.e., between 0 to 10 time units) of the  SIR members computed solving  the ODE system. Similar figures obtained from the transient solution of the CTMC at different time points are  not explicitly reported on these diagrams, but ensure the reliability of the method in this case. Figs.~\ref{Fig:2}(b) and~(c) show instead the temporal behavior of the  Infected members as it is obtained by solving the SDE system with 1000 runs and Euler step equal to 0.001. In details, Fig.~\ref{Fig:2}(b)   plots  the SDE traces (black lines) with respect to  the ODE trace that is represented as the white line in the middle of the tick cloud of black lines; Fig.~\ref{Fig:2}(c) reports the average SDE trace computed at specific points in time by considering the values obtained from the Euler simulations and augmented by confidence intervals that support the reliability of the method.  The averages of the SDE results are represented in the diagram with the dashed black line. The results obtained with the ODEs are represented instead with a solid red line that lies next to the previous one and that is covered (in any case) by the confidence intervals obtained from the Euler Simulation.
Figs.~\ref{Fig:2}(b) and (c) show that the solution quality of both approaches are comparable, but  it is important to highlight that the execution time for the two approaches are quite different.  Indeed the solution for the ODE system requires $\sim 1$sec., while that for the SDE model is six times slower ($\sim 6$sec.). However, we can observe that this overhead in the SDE solution can be  justified by the fact that the  SDEs provide also the probability distribution of each sub-population at  desired specific times.
For instance, Fig.\ref{Fig:2}(d) compares the probability distribution of Infected members at time 10 computed with GreatSPN solving the  CTMC underlying the same SPN (grey dotted bars) with that computed by the SDEs (black lines) with 1000 runs.
A good approximation is obtained with the SDE  approach reducing the execution time by a factor of $\sim 30$  with respect to that of the CTMC solution.

The second set of experiments  addresses instead a case where the presence  of barriers has a negative effect on the  correctness/quality  of  the solution computed by ODEs, while it is handled correctly by our SDE solution algorithm  which is still able to reproduce the expected  of the model. To stress this result, we performed this new set of experiments using the same basic model discussed before, where we changed the transition rates as reported  in the third column of Table~\ref{tab:rate} and we assumed that all the population members were originally concentrated in place \emph{Outside} ($m_{O}=200$ and $m_{S} = m_{I} = m_{R} = 0$). With this configuration of the model, we can observe that  the temporal behavior of the  SIR members derived by solving the ODE system disagree with those computed solving the  CTMC (see Table~\ref{tab:exp1} second and third columns).
This does not contradict Kurzt's theorem since it cannot be directly applied due to the choice of these transition rate values which leads the number of Infected members to be equal to $0$ (i.e., corresponding to  the lower bound of this quantity)  most of the time. Instead our approach based on SDE is still able to cope with this case providing a good approximation for the CTMC solution (i.e., Table~\ref{tab:exp1} second and fourth columns).
\begin{table}
\centering \scriptsize
\begin{tabular}{|c|r|r|r|}
  \hline
	 \multicolumn{1}{|c|}{\textbf{Sub-population}} &
         \multicolumn{1}{c|}{\textbf{CTMC}}  &\multicolumn{1}{c|}{\textbf{ODE}}&\multicolumn{1}{c|}{\textbf{SDE}}\\
\hline\hline
\emph{S} &62.636& 4.367& 61.906 +/-1.55\\
 \hline
\emph{I} &127.965 & 183.825& 128.780 +/-1.48\\
 \hline
\emph{R} &4.280 & 4.454 &4.268 +/-0.04\\
 \hline
\end{tabular}
\vspace{0.2cm}
 \caption{Average number of members in each sub-population at time 100}\vspace{-0.8cm}
\label{tab:exp1}
\end{table}

In Fig.\ref{Fig:comp1} the probability distributions  of SIR members at time 100 derived by the CTMC (grey dotted bars) are compared  with those  computed on the same model by SDE (black lines) with 5000 runs and step 0.01.
From these graphs it is clear how our approach is still able to reproduce with a high precision  these complex probability distributions reducing the  memory demand (from $\sim$162MB to $\sim$1MB) and the execution time (from $\sim$600s.  to $\sim$28s.).

\begin{figure}[tbp]
   \centering
   \includegraphics[width=0.90\textwidth]{./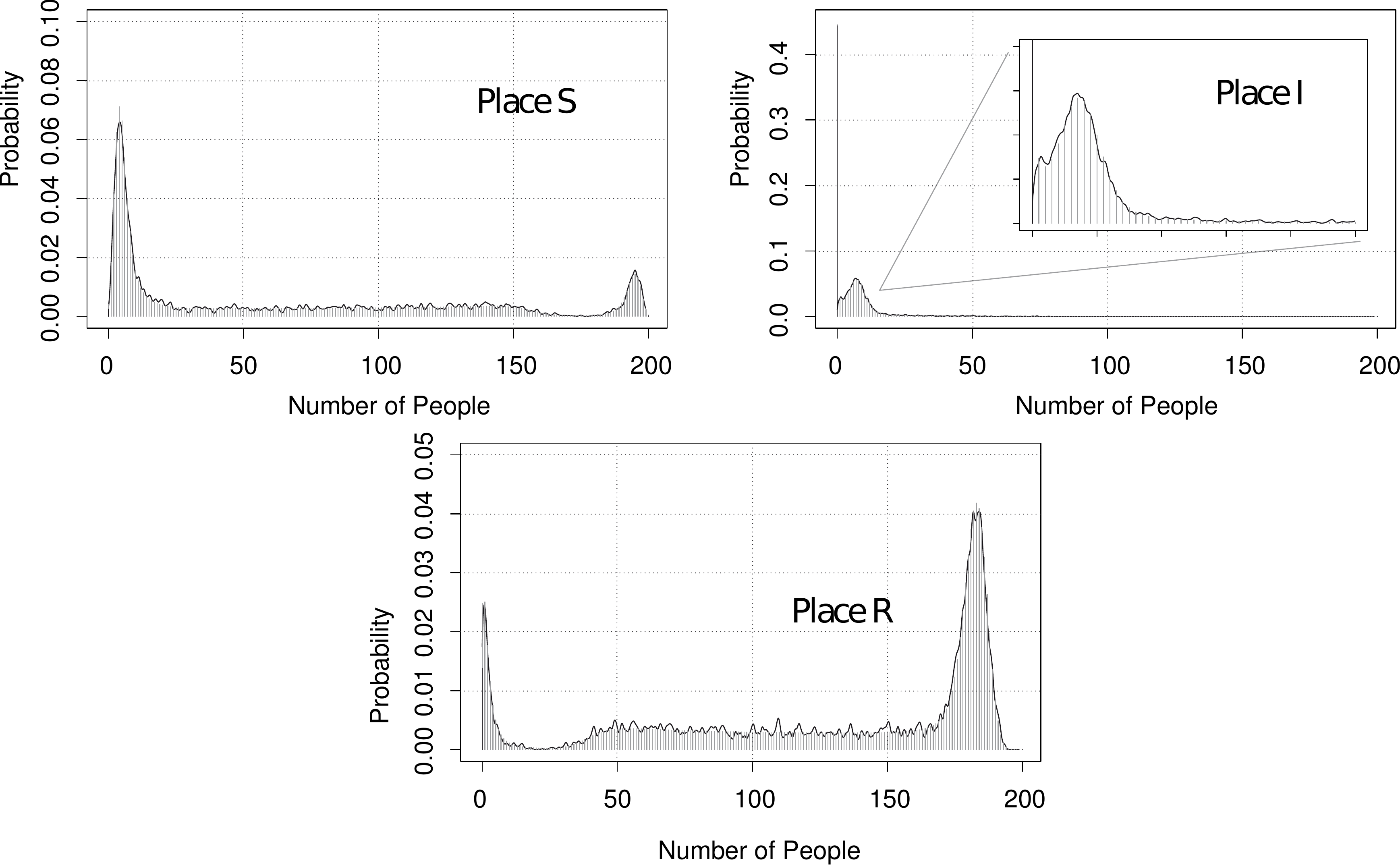}
   \caption{Second SIR experiment: Comparison between SDEs and CTMC for  SIR members}\vspace{-0.5cm}\label{Fig:comp1}
\end{figure}

 \begin{table}
\centering \scriptsize
\begin{tabular}{|r|r|r|r|r|}
  \hline
	 \multicolumn{1}{|c|}{} & \multicolumn{2}{c|}{\textbf{SDE}}& \multicolumn{2}{c|}{\textbf{Simulation}}\\
	  \multicolumn{1}{|c|}{\textbf{$|$Population$|$}}&\multicolumn{1}{c|}{\textbf{Exec. Time}}&\multicolumn{1}{c|}{\textbf{$E[I]_{T=100}$}}&\multicolumn{1}{c|}{\textbf{Exec. Time}}&\multicolumn{1}{c|}{\textbf{$E[I]_{T=100}$}}\\
 \hline
 \hline
 200 & 128.78 +/-1.48& 26s. &128.02+/-1.48 &12s.\\
  \hline
 2,000 & 73.78 +/-0.19 &26s. &73.51+/-0.19 & 155s. \\
  \hline
 20,000 &735.31+/-0.62 &27s. &735.03+/-0.61& 20m.\\
  \hline
 200,000 & 7352.62+/-0.62& 27s. &7352.32+/-0.62& 5h. \\
 \hline
\end{tabular}
\vspace{0.2cm}
 \caption{Comparing SDE approach with simulation varying the population size}\vspace{-0.8cm}
\label{tab:exp2}
\end{table}

Finally, to show how our approach scales when increasing the population, we report in Table~\ref{tab:exp2}. the results obtained with our method and those computed with (standard) Discrete Event Simulation.
The first column of  Table~\ref{tab:exp2} reports the
population size, the second and third show the average number of infected members at time 100 computed with the SDE approach and the execution time needed for its computation; similarly,  the last two columns contain the same information referred to the Discrete Event Simulation.
From these results it is clear how our approach is able to obtain a speed-up with respect to the Discrete Event Simulation requiring the same used memory.
Obviously this speed-up depends on the characteristic of the model and increases/decreases proportionally to   the time spent  by the  quantities  of interest  on their boundaries. Indeed, as explained in Sec.\ref{sec:SPNfluid}, every time that a quantity of interest reaches one of its  bounds, our approach uses a classical Discrete Event Simulation to derive its next evolution step.

 The last set of experiments is related to the Client Server system and shows how the SDE approach deals with models exhibiting
\emph{multi-modal} behavior.
Indeed, choosing  the transition rates as reported in the last two columns of Table~\ref{tab:rate}  and an initial marking with 120 idle servers and 10,000 clients in local computation, we can observe that the  probability distribution of tokens in place \emph{Cwaiting} at time 18, computed thought Discrete Event Simulation,  has a multi-modal shape (see Fig.\ref{Fig:ClSe4} dashed light line).
In particular, the first mode corresponds  to the situation where most of the clients failed,   the second one  to the situation where only few servers are down and few clients are failed, and the last mode to the situation where most of the servers are down and only few clients are failed.
Comparing  this dashed line with the black solid line plotting the same measure computed by the SDE approach, we can observe that a good level of approximation is attained by such an approach reducing the computation time by a factor of $\sim 13$.
Indeed, the two lines are often difficult to distinguish (thus showing the good level of agreement between the two methods) thus identifying the modes (picks) of the distributions in a satisfactory manner, even if the absolute values of the two lines are not perfectly matched in these cases.

\begin{figure}[tbp]
   \centering
   \includegraphics[width=0.50\textwidth]{./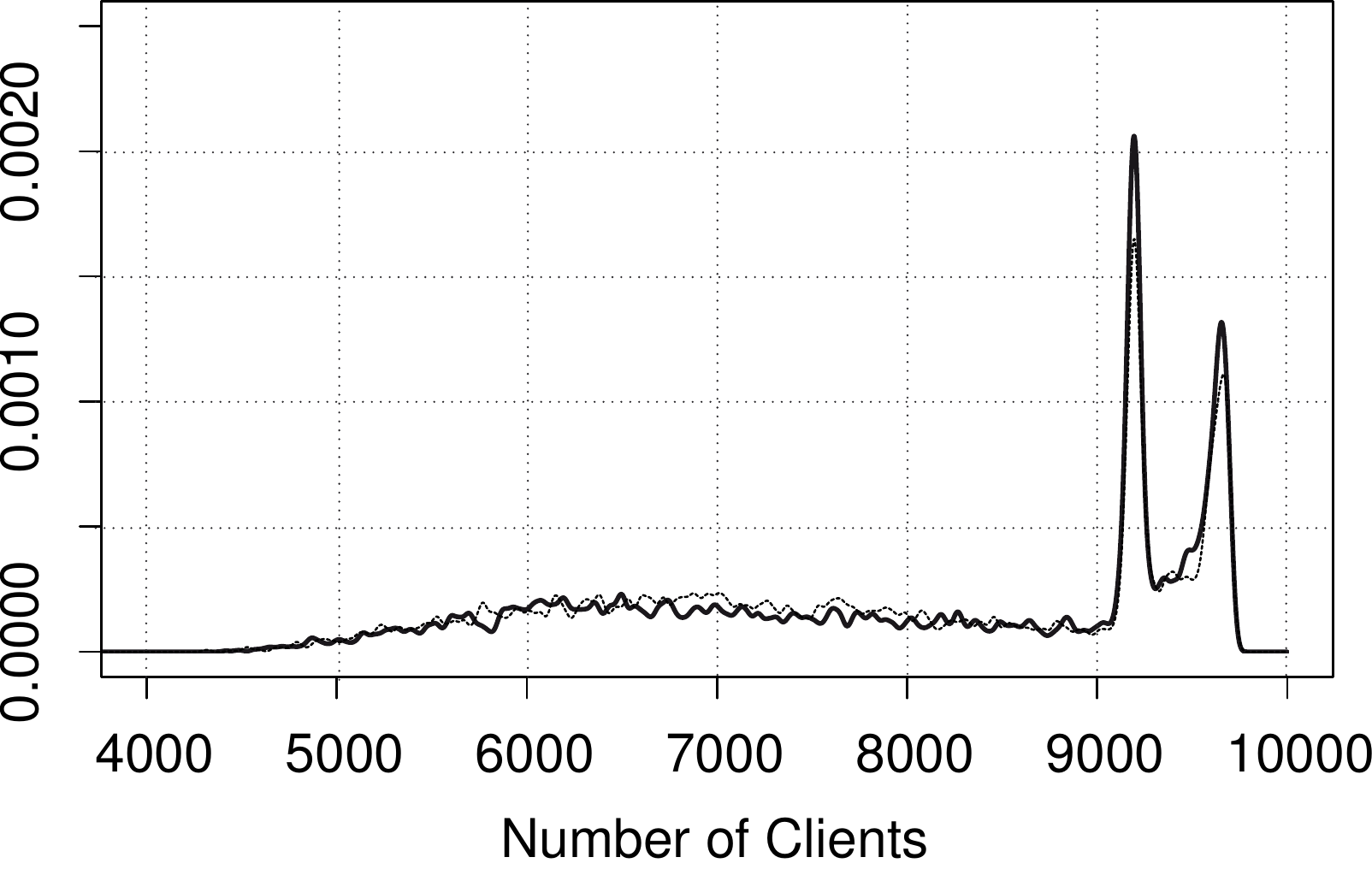}
   \caption{First client server experiment: comparison between SDEs and Discrete Event Simulation for the distribution of tokens in place \emph{Cwaiting} at time 18}\vspace{-0.5cm}\label{Fig:ClSe4}
\end{figure}


\section{Conclusion}

In this paper we considered approximations of SPNs.  We identified a class
of SPNs for which the underlying stochastic process is a density dependent
CTMC.  Consequently, it is possible to apply to these SPNs both a
deterministic approximation based on ODEs and a diffusion approximation
based on SDEs, both being introduced by Kurtz.  The diffusion
approximation, as presented in its original form, is defined only up to the
first exit from an open set.  Since in many applications barriers are
important, we extended the diffusion approximation by jumps that take into
account the behavior of the process on the barriers.  We showed by
numerical examples that the resulting jump diffusion approximation provides
precise information about the distribution of the involved quantities and
outperforms the deterministic approach when the original process is with
significant variability or it involves distributions whose mean value does
not carry much information.
Future work will investigate the numerical aspects of the prototype implementation used for the experiments and the possibility of applying these idea to the analysis of hybrid models.

{\footnotesize
\noindent
{\bf Acknowledgments.}
We gratefully acknowledge  the anonymous referees for their useful suggestions that helped improving the final version of the paper. This work has been supported in part by project ``AMALFI 
'' sponsored by Universit\aa\ di Torino and Compagnia di San Paolo. 
}

\bibliographystyle{splncs03}
\bibliography{bibl}

\begin{thebibliography}{10}
\providecommand{\url}[1]{\texttt{#1}}
\providecommand{\urlprefix}{URL }

\bibitem{BOABCDF95}
Ajmone~Marsan, M., Balbo, G., Conte, G., Donatelli, S., Franceschinis, G.:
  {Modelling with Generalized Stochastic Petri Nets}. J. Wiley, New York, NY,
  USA (1995)

\bibitem{BabarBDM10}
Babar, J., Beccuti, M., Donatelli, S., Miner, A.S.: Greatspn enhanced with
  decision diagram data structures. In: Proceedings of Applications and Theory
  of Petri Nets, 31st Int. Conference, Braga, Portugal, June 21-25,. pp.
  308--317. IEEE Computer Society (June 2010)

\bibitem{Bal-01}
Balbo, G.: {Introduction to Stochastic Petri Nets}. In: Brinksma, E., Hermanns,
  H., Katoen, J.P. (eds.) {Formal Methods and Performance Analysis, LNCS Vol.
  2090}, pp. 84--155. Springer-Verlag, Berlin, Germany (May 2001)

\bibitem{Beccuti12}
Beccuti, M., Franceschinis, G.: Efficient simulation of stochastic well-formed
  nets through symmetry exploitation. In: Proceedings of the Winter Simulation
  Conference. pp. 296:1--296:13. WSC '12, IEEE Computer Society (Dec 2012)

\bibitem{Beccuti13}
Beccuti, M., Fornari, C., Franceschinis, G., Halawani, S.M., Ba-Rukab, O.,
  Ahmad, A.R., Balbo, G.: From symmetric nets to differential equations
  exploiting model symmetries. The Computer Journal  (2013)

\bibitem{bremaud1999}
Br{\'e}maud, P.: Markov chains, Texts in Applied Mathematics, vol.~31.
  Springer-Verlag, New York (1999), gibbs fields, Monte Carlo simulation, and
  queues

\bibitem{Fish1978}
Fishman, G.S.: Principles of Discrete Event Simulation. John Wiley \& Sons,
  Inc., New York, NY, USA (1978)

\bibitem{GaetaTSE96}
Gaeta, R.: {Efficient Discrete-Event Simulation of Colored Petri Nets}. IEEE
  Transactions on Software Engineering  22(9),  629--639 (1996)

\bibitem{gillespie}
Gillespie, D.T.: The chemical langevin equation. J. Chem. Phys.  113,  297
  (2000)

\bibitem{SIR27}
Kermack, W., McKendrick, A.: A contribution to the mathematical theory of
  epidemics. Proceedings of the Royal Society of London. Series A  115(772),
  700--721 (Aug 1927)

\bibitem{klebaner}
Klebaner, F.C.: Introduction to stochastic calculus with applications. Imperial
  College Press, London, third edn. (2012)

\bibitem{kloeden1992}
Kloeden, P.E., Platen, E.: Numerical solution of stochastic differential
  equations, vol.~23. Springer (1992)

\bibitem{Ku70}
Kurtz, T.G.: Solutions of ordinary differential equations as limits of pure
  jump {M}arkov processes. Journal of Applied Probability  1(7),  49--58 (1970)

\bibitem{kurtz1976limit}
Kurtz, T.G.: Limit theorems and diffusion approximations for density dependent
  markov chains. In: Stochastic Systems: Modeling, Identification and
  Optimization, I, pp. 67--78. Springer (1976)

\bibitem{kurtz1978strong}
Kurtz, T.G.: Strong approximation theorems for density dependent markov chains.
  Stochastic Processes and Their Applications  6(3),  223--240 (1978)

\bibitem{molloy:spn}
Molloy, M.K.: Performance analysis using stochastic {P}etri {N}ets. IEEE
  Transactions on Computers  31(9),  913--917 (1982)

\bibitem{JANE}
Pourranjbar, A., Hillston, J., Bortolussi, L.: Don{\textquoteright}t just go
  with the flow: Cautionary tales of fluid flow approximation. In: Tribastone,
  M., Gilmore, S. (eds.) EPEW 2012, and UKPEW 2012. Lecture Notes in Computer
  Science, vol. 7587, p. 156{\textendash}171. Springer (2012)

\bibitem{Tr10}
Tribastone, M.: Scalable differential analysis of large process algebra models.
  In: 7th Int. Conference on the Quantitative Evaluation of Systems. p. 307.
  IEEE Computer Society, Williamsburg, Virginia, USA (sept 2010)

\bibitem{TrGiHi12}
Tribastone, M., Gilmore, S., Hillston, J.: Scalable differential analysis of
  process algebra models. IEEE Trans. Software Eng.  38(1),  205--219 (2012)

\end{thebibliography}

\end{document}